\newcolumntype{L}{r>{\centering\arraybackslash}m{.5cm}}
\newcommand{\ageM}{\Delta}
\newcommand{\ageMAvg}{\overline{\Delta}}
\newcommand{\ageGW}{\Delta'}
\newcommand{\ageGWVec}{\mathbf{\ageGW}}
\newcommand{\ageMVec}{\mathbf{\ageM}}
\newcommand{\ageGWVecCoupled}{\mathbf{\widetilde{\ageGW}}}
\newcommand{\ageMVecCoupled}{\mathbf{\widetilde{\ageM}}}
\newcommand{\ageMCoupled}{\widetilde{\ageM}}
\newcommand{\ageGWCoupled}{\widetilde{\ageGW}}
\DeclareMathOperator*{\argmin}{arg\,min}
\begin{document}
\title{Minimizing Age in Gateway Based Update Systems}


\IEEEoverridecommandlockouts
\author{%
  \IEEEauthorblockN{Sandeep Banik$^{\dagger}$, Sanjit K. Kaul$^{\dagger}$$^{*}$ and P. B. Sujit$^{\dagger}$}
  \IEEEauthorblockA{$^{*}$Wireless Systems Lab, IIIT-Delhi, $^{\dagger}$Department of ECE, IIIT-Delhi\\
                    Email: \{sandeepb, skkaul, sujit\}@iiitd.ac.in}\thanks{This work was partially supported by the EPSRC GCRF grant EP/P02839X/1 and the Young Faculty Research Fellowship (Visvesvaraya PHD scheme) received by Sanjit Kaul. We acknowledge the support of the Infosys Center for Artificial Intelligence at IIIT-Delhi.}
}


\maketitle

\begin{abstract}
We consider a network of status updating sensors whose updates are collected and sent to a monitor by a gateway. The monitor desires as fresh as possible updates from the network of sensors. The gateway may either poll a sensor for its status update or it may transmit collected sensor updates to the monitor. We derive the average age at the monitor for such a setting. We observe that increasing the frequency of transmissions to the monitor has the upside of resetting sensor age at the monitor to smaller values. However, it increases the length of time that elapses before a sensor is polled again. This motivates our investigation of policies that fix the number of sensors s the gateway polls before transmitting to the monitor. 

For any s, we show that when sensor transmission times to the gateway are independent and identically distributed (iid), for independent but possibly non-identical transmission times to the monitor, it is optimal to poll a sensor with the maximum age at the gateway first. Also, under simplifying assumptions, the optimal value of s increases as the square root of the number of sensors. For non-identical sensor transmission times, we consider a policy that polls a sensor such that the resulting average change in age is minimized. We compare our policy proposals with other policies, over a wide selection of transmission time distributions.
\end{abstract}

\section{Introduction}
Internet-of-Things (IoT) deployments often have	large numbers of devices send their information to a monitor (an Internet server or a monitoring station) via a gateway. In general, the gateway could be a wireless access point or could be a mobile node like a UAV. The gateway must collect updates from the IoT devices (sensors) and deliver them to the monitor that desires sensed information to be fresh at its end. In such settings, the gateway must decide between updating the monitor with currently obtained sensed information and obtaining newer sensed information before sending all collected information to the monitor. When the gateway is sending information to the monitor, newer sensor updates must wait for the sending to end before they can hope to be delivered to the monitor. Update from a sensor may also have to wait for updates from others to be collected before all are sent to the monitor together. 

In this work, we investigate a setting where the gateway can either poll a sensor for a fresh update or send collected updates to the monitor. On being polled, a sensor begins transmission to the gateway, which takes a random amount of time. The gateway must wait for the transmission to end before it can make its next decision. The same applies for when the gateway chooses to transmit collected updates to the monitor. Figure~\ref{fig:gatewayIllustration} shows an illustration. We derive the average age of sensor updates at the monitor and observe that frequent transmissions to the gateway reset the age of sensed information at the monitor to smaller values. However, this comes at the cost of larger intervals of time between a given sensor being polled.
\begin{figure}[t]
	\centering
	\includegraphics[width=.2\textheight]{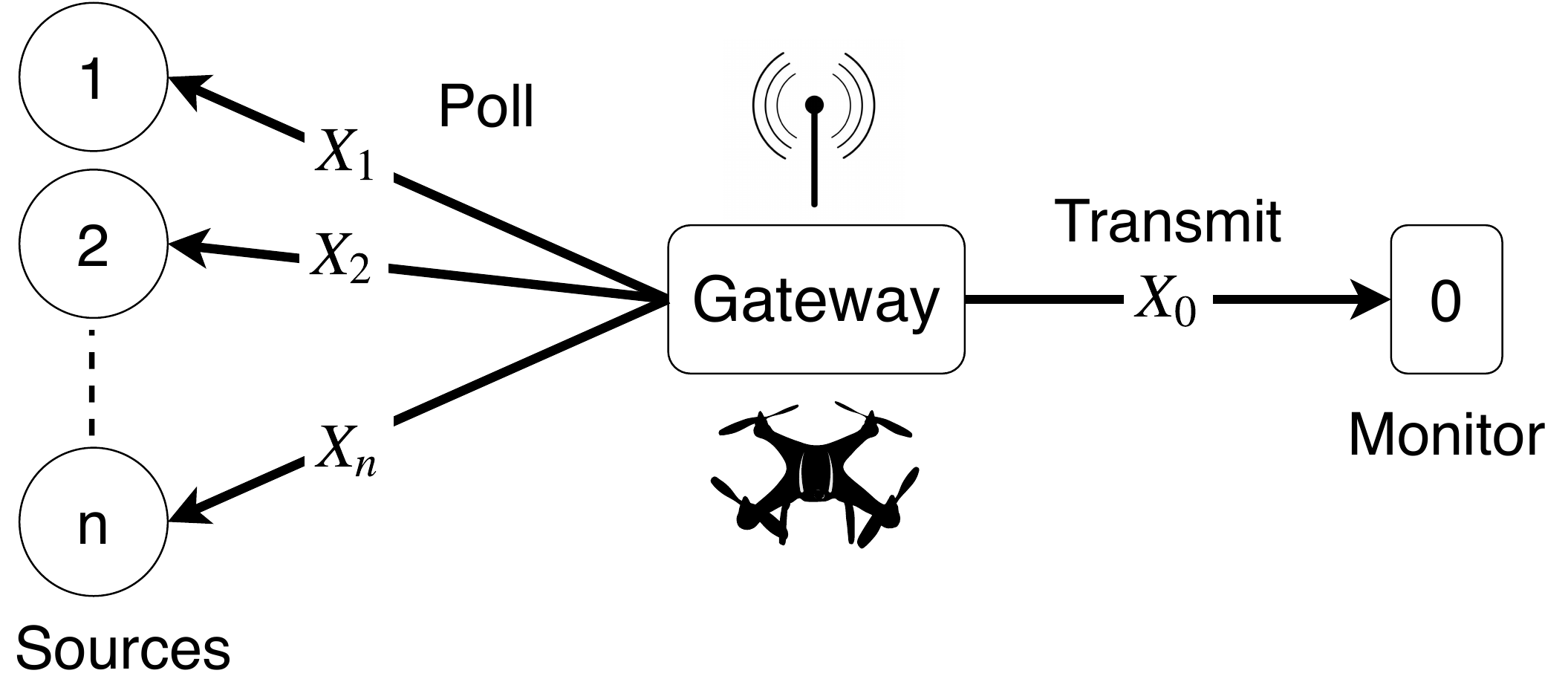}
	\caption{\small We have $n$ sensors, a gateway, and a monitor. Sensor $i$'s transmission time is distributed as $f_{X_i}(x)$. Transmission to the monitor is distributed as $f_{X_0}(x)$.}
	\label{fig:gatewayIllustration}
\vspace{-0.2in}	
\end{figure}

To capture the trade-off while keeping the decision problem at the gateway tractable, we investigate a class of policies that force the gateway to transmit collected sensor updates to the gateway after a fixed $s$ number of sensors have been polled. It turns out that for these policies, for any given $s$, for when the sensor transmission times are iid and the transmission to the monitor is independent but may be non-identically distributed, a policy that polls a sensor that has the maximum age at the \emph{gateway} first minimizes the average age at the \emph{monitor}. We compare such a policy with an optimal choice of $s^*$ with differing choices of polling sensors and frequencies of transmissions to the monitor. We also observe that under certain assumptions the optimal $s^*$ increases as the square root of the number of sensors. Finally, we devise and evaluate a policy, along the lines of the maximum age first, for when the sensor transmissions may also be non-identically distributed.

Various recent works~\cite{AoI-Igor-SPBN,AoI-BN-RA,SA-BN-nb,OLS-MM-PAOI, OFI-LS, AO-STS-MSS} have looked at the problem of scheduling in wireless networks. The authors in
\cite{AoI-Igor-SPBN,AoI-BN-RA,SA-BN-nb} considered the problem of minimizing the AoI in a wireless broadcast network. The maximum peak age of information is minimized using a fast scheduling algorithm in \cite{OLS-MM-PAOI}. In \cite{OFI-LS} a number of links with a common channel is considered and a scheduling algorithm is developed capable of handling scalability of the network. Minimizing the AoI of a multi-source system has been studied by \cite{AO-STS-MSS}, where only one source can be scheduled to the destination node via a channel.


The model that our work considers is different given the presence of a gateway that must aggregate sensor information and then send the same to a monitor. The maximum age first policy being optimal has been shown in different settings, for example, in~\cite{AoI-Igor-SPBN} and~\cite{AO-STS-MSS}. One can draw parallels between our work and these settings under suitable assumptions including that the transmissions to the monitor are instantaneous. As a result, we benefit from some of the proof techniques in these works. There are similarities with the model in~\cite{TFM-LCF} in that the work has broadly two classes of updating processes, one that has sources update information at their end and the other that has a server keep itself updated with information at the sources. In~\cite{AO-TP-UAV} the authors consider planning a UAV's path that takes it through sensors and back to the base station (monitor). However, they restrict themselves to plans that visit every sensor once before going to the monitor. Given our takeaways, we believe that there should be path plans where the UAV visits the monitor more frequently that lead to a smaller age. Finally, in~\cite{TripathiAgeGatheringDissemGraphs} the authors consider information dissemination from ground terminals (sensors) to a central terminal (monitor) using a mobile agent. However, communication between the mobile agent and the terminal is instantaneous.


The paper is organized as follows. We formulate the problem in Section~\ref{sec:model} and derive the average age at the monitor in Section~\ref{sec:avgAge}. In Section~\ref{sec:policy}, we describe the above mentioned policy, state its properties in Theorem~\ref{thm:MAF}. We also consider the case when sensor transmission times are non-identical. We conclude with a summary evaluation in Section~\ref{sec:eval}.

\section{Problem Formulation}
\label{sec:model}
Our network (see Figure~\ref{fig:gatewayIllustration}) consists of $n$ sensors indexed $1,2,\ldots,n$, a gateway, and a monitor indexed $0$. The gateway polls sensors for their fresh updates and transmits the same to the monitor. On being polled, a sensor $i$ transmits a fresh update to the gateway. The transmission of an update by sensor $i \ge 1$ takes a random amount of time $X_i$ distributed as $f_{X_i}(x)$, independently of transmission times of other updates sent by it, other sensors, or the gateway. Time taken by a transmission from the gateway to the monitor is distributed as $f_{X_0}(x)$ and is independent of other transmission times. In practice, random transmission times may result from packet retries carried out by, for example, the wireless link layer.

In this work, we will impose the constraint that exactly one transmission can take place at any given time. In practice, this may be desirable when the sensors and the gateway transmit over a shared wireless spectrum and interference due to overlapping transmissions cause them to be received in error. The gateway must decide between either polling a sensor and receiving the resulting transmission of the sensor's update or transmitting to the monitor polled updates that have not yet been transmitted to it. In either case, it must wait for the ensuing transmission to end before making its next decision.

Let $b_k, k\ge 1$ be the time the monitor receives the $k$\textsuperscript{th} transmission from the gateway. Let $b_0 = 0$. This transmission will include the status updates of all the sensors that were polled by the gateway starting at time $b_{k-1}$. Suppose a total of $s_k$ sensors indexed $i_1, i_2,\ldots, i_{s_k}$ were polled in the stated sequence. The interval $(b_{k-1}, b_k)$ is of random length. It starts with a transmission of length $X_{i_1}$ distributed as $f_{X_{i_1}}$ followed by transmissions by the $s_k - 1$ other sensors in the above specified order and lastly a transmission of length $X_0$ distributed as $f_{X_0}$ by the gateway to the monitor. As a result $b_k = b_{k-1} + X_0 + \sum_{j=1}^{s_k} X_{i_j}$.

Let $\ageM_{i}(t)$ be the age of updates from sensor $i$ at the monitor. If $u_i(t)$ is the timestamp of the most recent update of $i$ at the monitor, then $\ageM_{i}(t) = t - u_i(t)$. Correspondingly, let $\ageGW_{i}(t)$ be the age of status updates from sensor $i$ at the gateway. We assume that $\ageM_{i}(0) = \ageGW_{i}(0)$ for all sensors $i$. Note that at all time instants $b_k$, $k\ge 1$, the ages of the sensors at the monitor must be reset to their ages at the gateway. 

Now consider how the age processes evolve over the interval $(b_{k-1}, b_{k})$. As before, assume that a total of $s_k$ sensors were polled before the gateway chose to send its $k$\textsuperscript{th} transmission to the monitor. The $j$\textsuperscript{th} sensor polled is $i_j$. Let $\tau_{j}, j\ge 1$ be the time when the transmission by sensor $i_j$ to the gateway completes. Let $\tau_0 = b_{k-1}$. We have, for $j\in \{1,\ldots,s_k\}$ and any sensor $i$,
\begin{align}
\ageGW_{i}(\tau_{j}) &=
\begin{cases}
 X_{i_j} & i = i_j,\\
\ageGW_{i}(\tau_{j-1}) + X_{i_j} & i\ne i_j.
\end{cases}
\label{eqn:ageEvolAtGWa}
\end{align}
Further, at $b_k$, we have
\begin{align}
\ageGW_{i}(b_k) &= \ageGW_{i}(\tau_{s_k}) + X_0.
\label{eqn:ageEvolAtGWb}
\end{align}

On the other hand, the age $\ageM_{i}(t)$ of sensor $i$ at the monitor increases while the gateway receives transmissions from sensors. At time $b_k$, when the monitor has received the transmission from the gateway, the age $\ageM_{i}(b_k)$ of any sensor $i$ at the monitor is reset to $\ageGW_{i}(b_k)$. Note that sensors that were not polled during $(b_{k-1}, b_{k})$ will not see any reduction in age due to this reset. We have, for any sensor $i$,
\begin{align}
\ageM_{i}(b_k) &= \ageGW_{i}(b_k).
\label{eqn:ageEvolAtMonitor}
\end{align}


Define the vector of ages at the monitor as $\ageMVec(t) = \vec{\ageM_1(t)&\cdots&\ageM_n(t)}$. Let the corresponding vector at the gateway be $\ageGWVec(t) = \vec{\ageGW_1(t)&\cdots&\ageGW_n(t)}$. At any decision time $t$, the choice of polling a certain sensor or, instead, sending to the monitor is made as a function of the age vectors $\ageMVec(t)$ and $\ageGWVec(t)$. Note that decision times include $t=0$ and all times that mark the end of transmission that ensued as a result of the preceding decision. Define the set of all decisions $\mathcal{A} = \{0,1,\ldots,n\}$. 

Let $\pi$ be a policy that maps $\vec{\ageMVec(t), \ageGWVec(t)}$ to decisions in the set $\mathcal{A}$. The monitor would like to have as fresh as possible sensed information from all the sensors. Note that the sample function of age $\ageM_i(t)$ of sensor $i$ is a function of policy $\pi$. The time-average age $\ageMAvg_i^\pi$ of sensor $i$ at the monitor is 
\begin{align}
\ageMAvg_i^\pi = \lim_{t\to \infty} \frac{1}{t} \int_{0}^{t}\ageM_i(t).
\label{eqn:avgAgei}
\end{align}

Define the corresponding age-of-information (AoI) of the network of sensors at the monitor as 
\begin{align}
\ageMAvg^\pi = \frac{1}{n} E\left[\sum_{i=1}^{n} \ageMAvg_i^\pi\right].
\label{eqn:AoI}
\end{align}

We will restrict ourselves to policies that don't use any knowledge of the future when making a decision. Let the set of such policies be $\Pi$. We would like to find a policy $\pi^*\in \Pi$ that minimizes the AoI at the monitor. We want to find $\pi^* = \argmin_{\pi \in \Pi} \ageMAvg^\pi$. %
To simplify notation, we skip the explicit mention of $\pi$.
\section{Average Age of a Sensor}
\label{sec:avgAge}
\begin{figure}[t]
	\includegraphics[scale=0.5]{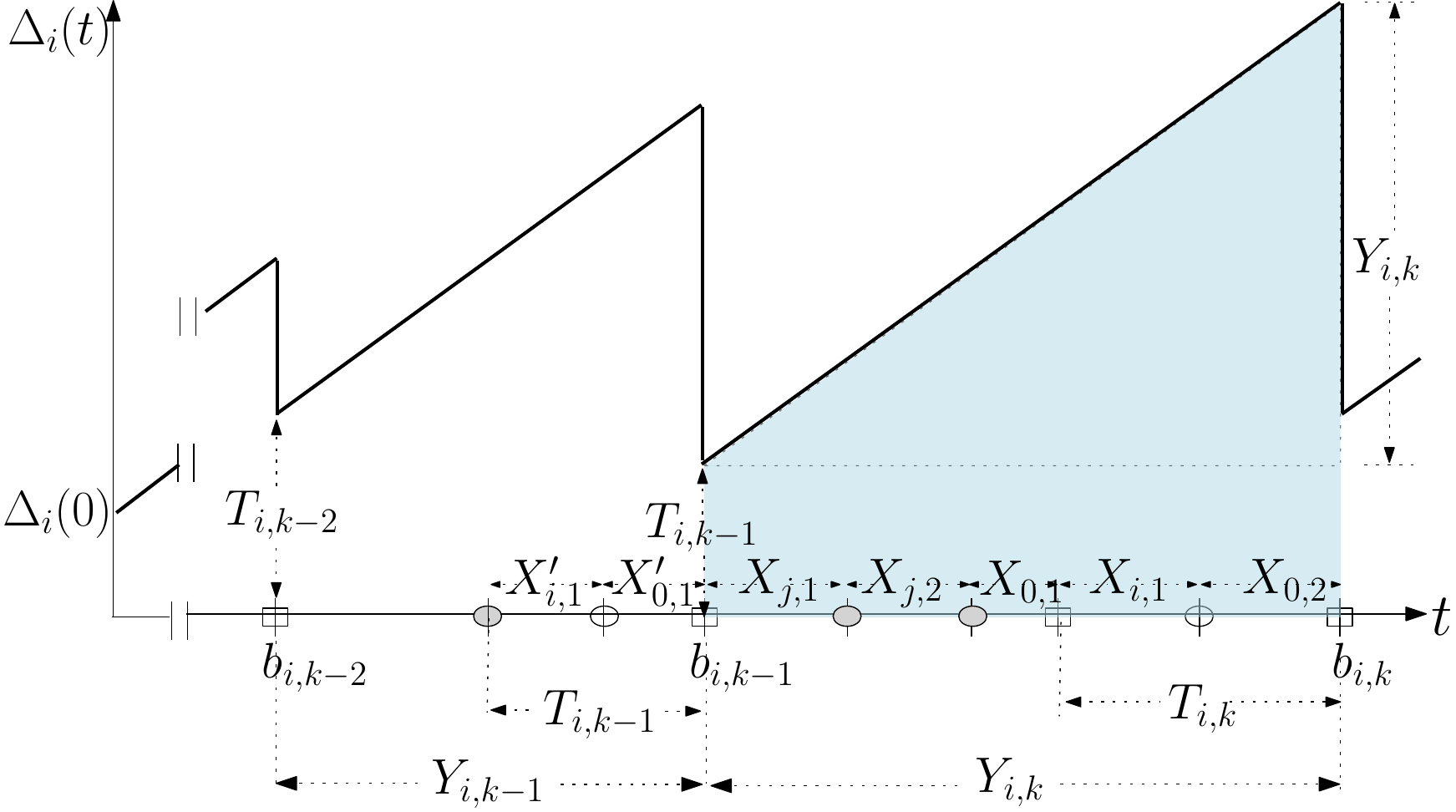}
	\caption{\small Sample function of age of sensor $i$ at the monitor. The empty circles are time instants when a fresh update from $i$ was received by the gateway. The filled circles correspond to the same for another sensor $j$. The empty squares are when the monitor receives a transmission from the gateway.}
	\label{fig:age}
\vspace{-0.1in}	
\end{figure}
We will derive the expression for the average age $\ageMAvg_i^\pi$ of sensor $i$ at the monitor. Let $b_{i,k}, k\ge 1$, be the time the monitor receives the $k$\textsuperscript{th} such transmission from the gateway that resets the age of sensor $i$ at the monitor. It is the $k$\textsuperscript{th} transmission that contains a newly polled status update from sensor $i$. Let $Y_{i,k}$ be the time between the $k-1$\textsuperscript{th} and $k$\textsuperscript{th} such reset of age of $i$ at the monitor. We have $Y_{i,k} = b_{i,k} - b_{i,k-1}$.

Let $\kappa_{i,k}$ be the time of most recent polling of sensor $i$ before $b_{i,k}$. This polling time will lie in the interval $(b_{i,k-1}, b_{i,k})$. Let $T_{i,k} = b_{i,k} - \kappa_{i,k}$. $T_{i,k}$ is the age of the update from the sensor $i$ when it is received by the monitor at $b_{i,k}$. The age $\ageM_i(t)$ resets to $T_{i,k}$ at $t= b_{i,k}$. We have $\ageM_i(b_{i,k}) = T_{i,k}$.

As illustrated in Figure~\ref{fig:age}, the area under the sample function $\ageM_i(t)$ can be written as a sum of areas under the function over the intervals $(b_{i,k-1}, b_{i,k}), k\ge 1$. The area in a given interval $(b_{i,k-1}, b_{i,k})$ is simply $Q_{i,k} \triangleq Y_{i,k} T_{i,k-1} + Y^2_{i,k}/2$. Using this fact and Equation~(\ref{eqn:avgAgei}) we can write the average age $\ageMAvg_i$ as
\begin{align}
\ageMAvg_i &= \lim_{t \to \infty} \frac{k(t)}{t} \frac{1}{k(t)} \sum_{j=1}^{k(t)} \left[Y_{i,j} T_{i,j-1} + \frac{Y^2_{i,j}}{2}\right],\nonumber
\end{align}
where $k(t)$ is the number of such areas that span the interval $(0,t)$. Assuming in the limit as $t\to \infty$, $\frac{k(t)}{t} \to 1/E[Y_i]$ and $(Q_{i,k}/k) \to E[Q_i]$, $Y_{i,j} \sim Y_i$ and $T_{i,j-1} \sim T_i$, we can write
\begin{align}
\ageMAvg_i &= \frac{1}{E[Y_i]}\left[E[Y_iT_i] + \frac{E[Y_i^2]}{2}\right].
\label{eqn:age}
\end{align}
Now consider the packet transmissions that constitute the intervals $T_{i,k-1}$ and $Y_{i,k}$. See Figure~\ref{fig:age} for an illustration. The interval $T_{i,k-1}$ begins with the sensor $i$ being polled by the gateway and ends with the gateway transmitting to the monitor. Therefore, the interval contains exactly one transmission of length distributed as $f_{X_i}$ and one distributed as $f_{X_0}$. It may contain zero or more transmissions by sensors other than $i$. The interval $Y_{i,k}$ must contain at least one transmission by sensor $i$ and at least one to the monitor. It can contain zero or more of the other sensors.

Let $s_{j,k}$ be the number of transmissions by sensor $1 \le j \le n$ in the interval $(b_{i,k-1}, b_{i,k})$ of length $Y_{i,k}$. We have $s_{i,k} \ge 1$ and $s_{j,k} \ge 0$ for $j \ne i$. Let $s_{0,k}$ be the number of transmissions to the monitor. Similarly, let $s'_{j,k}$ and $s'_{0,k}$ respectively be the number of transmissions by sensor $j$ and by the gateway to the monitor, over the interval $T_{i,k-1}$. We have, $s'_{j,k} \ge 0$ for $j\ne i$, $s'_{i,k} = 1$, and $s'_{0,k} = 1$. Let $X_{j,l}$ and $X'_{j,l}$, $1\le j \le n, l\ge 0$, be the lengths of the $l$\textsuperscript{th} transmissions of sensor $j$ in the intervals $Y_{i,k}$ and $T_{i,k-1}$, respectively. Similarly, define $X_{0,l}$ and $X'_{0,l}$ for transmissions to the monitor. We can write
\begin{align}
Y_{i,k} &= \sum_{j=1}^{n} \sum_{l=1}^{s_{j,k}} X_{j,l} + \sum_{l=1}^{s_{0,k}} X_{0,l},\nonumber\\
T_{i,k-1} &= X'_{i,1} + X'_{0,1} + \sum_{\substack{j\ne i\\ j\ge 1}} \sum_{l=1}^{s'_{j,k}} X'_{j,l}.
\end{align}
Observe that increasing the frequency of transmissions to the monitor for a given polling schedule of sensors by the gateway will on an average reduce the age of an update that is received by the monitor.  This is because the average time between polling a sensor and sending its update to the monitor will reduce. On an average $T_{i,k}$ will be smaller. However, more frequent transmissions to the monitor will mean more transmissions to the monitor that don't have the latest polled update of a sensor. On an average one would expect a larger $Y_{i,k}$, which is undesirable for a small sensor age at the monitor. Next, we will derive a policy that captures this trade-off between the intervals $T$ and $Y$. 
\section{Poll-$s$ Then Send To Monitor}
\label{sec:policy}
We consider the set $M^{(s)}$ of policies, parameterized by $1 \le s\le n$, in which the gateway transmits polled sensor updates to the monitor every time after a fixed number $s$ of sensors (not necessarily unique) have been polled by it. We will begin by considering the case of homogeneous transmission lengths that is when sensors' polling times are identical and independent. Later we will briefly consider the heterogeneous setting in which polling times may not be identical.

\subsection{Homogeneous Transmission Lengths}
Next we state the optimal method of polling the sensors before transmitting to the monitor, for any given $s\ge 1$ and for when all the sensors have iid transmission times.

\begin{theorem}
Assume the transmission times of all sensors are iid. We have $X_i \sim f_X(x), 1\le i\le n$. Transmissions to the monitor are independent and distributed as $f_{X_0}(x)$. For any given $s\ge 1$, at a decision instant $t$ at which the policy must poll a sensor, polling a sensor $i$ whose age $\ageGW_i(t)$ at the gateway is the largest, achieves the minimum AoI~(\ref{eqn:AoI}) among the policies in the set $M^{(s)}$.
\label{thm:MAF}
\end{theorem}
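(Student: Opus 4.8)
\emph{Plan.} I would prove this by an interchange argument along coupled sample paths, in the spirit of the maximum-age-first optimality proofs of~\cite{AoI-Igor-SPBN,AO-STS-MSS}. The structural feature that makes such an argument work here is that the \emph{durations} making up a cycle $(b_{k-1},b_k)$ do not depend on which sensors are polled: since the $X_i$ are iid and the monitor transmissions are independent copies of $X_0$, one may realize the randomness as a fresh $f_X$ draw for the ``$j$\textsuperscript{th} poll of cycle $k$'' and a fresh $f_{X_0}$ draw for each gateway-to-monitor transmission. Any two policies in $M^{(s)}$, which all poll exactly $s$ sensors per cycle, can then be run on the \emph{same} realization, sharing the decision instants $\tau_j,b_k$ and the cycle lengths; only the identities of the polled sensors --- hence which gateway ages in~(\ref{eqn:ageEvolAtGWa}) get reset --- differ.

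\emph{The interchange.} Fix $\pi\in M^{(s)}$ and let $\pi^{*}$ poll, at every poll instant, a sensor of largest gateway age. On a coupled path, let $t$ be the first poll instant at which $\pi$ polls a sensor $b$ with $\ageGW_b(t)<\max_i\ageGW_i(t)$, and pick $a$ with $\ageGW_a(t)=\max_i\ageGW_i(t)$. Let $\pi'$ poll $a$ at $t$ and, at every later decision instant, feed to $\pi$ a copy of its own age vectors with coordinates $a$ and $b$ interchanged, read off $\pi$'s action, and interchange $a\leftrightarrow b$ in it (a monitor transmission, or a poll of a label outside $\{a,b\}$, is kept as is). Then $\pi'$ polls exactly when $\pi$ polls and transmits exactly when $\pi$ transmits, so $\pi'\in M^{(s)}$ and is non-anticipating. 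I would then prove, by induction over decision instants, the following invariant valid from the instant $\tau$ at which the poll made at $t$ completes: ${\ageGW_c}^{\pi'}={\ageGW_c}^{\pi}$ for $c\notin\{a,b\}$, ${\ageGW_a}^{\pi'}={\ageGW_b}^{\pi}$, and ${\ageGW_b}^{\pi'}\le{\ageGW_a}^{\pi}$; and, in addition, $\ageM_i^{\pi'}=\ageM_i^{\pi}$ for all $i$ at all times strictly before the first reset instant $b_K\ge t$, after which these same three relations hold with $\ageM$ in place of $\ageGW$. The base case, at $\tau$, follows from~(\ref{eqn:ageEvolAtGWa}) with the common coupled duration, using only $\ageGW_b(t)\le\ageGW_a(t)$. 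The inductive step runs through ``$\pi$ polls some $c\notin\{a,b\}$'' (then $\pi'$ polls $c$), ``$\pi$ polls $a$'' (then $\pi'$ polls $b$), ``$\pi$ polls $b$'' (then $\pi'$ polls $a$), and ``$\pi$ transmits to the monitor'', using~(\ref{eqn:ageEvolAtGWa})--(\ref{eqn:ageEvolAtGWb}) for the gateway ages and~(\ref{eqn:ageEvolAtMonitor}) to carry the gateway relations onto the monitor at each $b_k\ge b_K$; in each case both policies add the same coupled duration to both sides of every relation, or reset both sides to that duration. Granting the invariant, the coordinates outside $\{a,b\}$ cancel and $\ageM_a^{\pi'}(u)+\ageM_b^{\pi'}(u)=\ageM_b^{\pi}(u)+\ageM_b^{\pi'}(u)\le\ageM_b^{\pi}(u)+\ageM_a^{\pi}(u)$, so $\sum_i\ageM_i^{\pi'}(u)\le\sum_i\ageM_i^{\pi}(u)$ for every $u$, with equality for $u<b_K$. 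Averaging over $(0,T)$, letting $T\to\infty$, and taking expectations --- under the integrability that makes~(\ref{eqn:avgAgei}) and~(\ref{eqn:AoI}) well defined --- gives $\ageMAvg^{\pi'}\le\ageMAvg^{\pi}$.

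\emph{From one interchange to the theorem.} Each interchange leaves $\pi'$ agreeing with $\pi^{*}$ on strictly more poll instants than $\pi$, and it does not increase the AoI. Iterating, at most $sN$ interchanges (there being $s$ poll instants per cycle) yield a policy in $M^{(s)}$ that coincides with $\pi^{*}$ on every poll decision of the first $N$ cycles and still has AoI $\le\ageMAvg^{\pi}$. Since $E[\int_0^{b_N}\sum_i\ageM_i(u)\,du]$ depends only on the poll decisions of those cycles ($b_N$ being policy-independent on the coupled path), $\pi^{*}$ minimizes it for every $N$; dividing by the common $E[b_N]$ and letting $N\to\infty$ --- invoking the renewal-reward structure of the $M^{(s)}$ policies --- yields $\ageMAvg^{\pi^{*}}\le\ageMAvg^{\pi}$ for every $\pi\in M^{(s)}$. (Ties are immaterial: if $\ageGW_a(t)=\ageGW_b(t)$ the interchange leaves the AoI exactly unchanged.)

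\emph{Where the difficulty is.} I expect two points to need care. First, organizing the invariant so the induction stays simple: verifying it at \emph{every} decision instant reduces each case to a one-line check, whereas comparing the gateway-age vectors only across whole cycles forces a far messier analysis in which ``$a$ has the largest gateway age at $t$'' must be used quite delicately. The genuine obstacle, though, is the passage to the limit in the previous paragraph: turning the finite, pairwise improvements into optimality of $\pi^{*}$ relies on the existence of the long-run time averages and a renewal-reward identification, i.e.\ exactly the regularity already assumed in Section~\ref{sec:avgAge}. Checking that $\pi'$ is a legitimate non-anticipating policy is routine, the invariant letting $\pi'$ reconstruct $\pi$'s state from its own.
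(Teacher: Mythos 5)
Your route is genuinely different from the paper's. The paper uses the same coupling insight you identify (iid sensor times plus the fixed-$s$ structure make the decision epochs and cycle lengths policy-independent), but instead of an interchange argument it proves a stochastic-dominance statement on \emph{ordered} age vectors: Lemma~\ref{lem:induction} shows by induction over decision instants that the sorted gateway ages under the max-age-first policy are coordinatewise dominated by those under any $\pi\in M^{(s)}$, Lemma~\ref{lem:stochasticOrdering} transfers this to the monitor ages through the reset rule~(\ref{eqn:ageEvolAtMonitor}), and the theorem follows because the AoI~(\ref{eqn:AoI}) is the expectation of a nondecreasing functional of that process (citing standard stochastic-order results). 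This buys a one-shot comparison against an arbitrary $\pi$ with no modified-policy construction, no iteration over interchanges, and no extra limiting step beyond the averages already assumed in Section~\ref{sec:avgAge}. Your approach is more elementary (no stochastic-ordering machinery) and yields, for each interchange, a pointwise inequality on the unordered sum of ages at every time instant, but it pays for this with the bookkeeping of iterating interchanges cycle by cycle and the final renewal-reward passage $N\to\infty$ that you yourself flag.

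One step needs repair: your definition of $\pi'$. You define $\pi'$ by applying the map $\pi$ to $\pi'$'s own age vector with coordinates $a,b$ interchanged, but your invariant only gives $\ageGW_{\pi',b}\le\ageGW_{\pi,a}$, possibly strictly, so the swapped state of $\pi'$ need not equal $\pi$'s realized state; the map $\pi$ may then prescribe at that state a different action from the one $\pi$ actually takes on its own trajectory, the claimed correspondence (``$\pi$ polls $a$ $\Rightarrow$ $\pi'$ polls $b$,'' etc.) can fail, and with it the equality part of the invariant for coordinates outside $\{a,b\}$. The standard fix is to let $\pi'$ replicate $\pi$'s \emph{realized} action sequence with the labels $a$ and $b$ interchanged: on the coupled path this is non-anticipating, because $\pi'$ can reconstruct $\pi$'s trajectory from the common initial ages and the shared duration draws (not, as you suggest, from its own current state alone), and your induction cases are in fact written exactly for this definition. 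Such a $\pi'$ is history-dependent while the paper restricts to state-feedback maps, so conclude by proving optimality of max-age-first over all non-anticipating policies in $M^{(s)}$ and noting that max-age-first is itself state-feedback. With that modification your argument goes through.
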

\begin{corollary}
	Given sensor transmission times are iid, a policy $\mu \in M^{(s)}$ that polls a sensor $i$ that has the largest age $\ageGW_i(t)$ must poll all other sensors exactly once before it polls the sensor $i$ again.
	\label{corr:rr}
\end{corollary}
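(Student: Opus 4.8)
The plan is to prove the stronger fact that the max-age-first (MAF) rule of Theorem~\ref{thm:MAF} makes the gateway's polling \emph{round-robin}: the sequence of polled sensors repeats a single cyclic permutation of $\{1,\dots,n\}$, from which the corollary is immediate. Two structural observations about gateway ages, both read off from~(\ref{eqn:ageEvolAtGWa})--(\ref{eqn:ageEvolAtGWb}), drive the argument. First, between consecutive polls of a sensor $j$ its gateway age is nondecreasing, and at a poll it is reset to the (positive) transmission length $X_j$ of that update; consequently all gateway ages are strictly positive at every decision instant. Second---the crux---when a sensor $a$ is polled at time $\tau$, the very transmission of length $X_a$ that refreshes $a$ also raises the gateway age of \emph{every} other sensor by exactly $X_a$; hence immediately after $\tau$ we have $\ageGW_j(\tau)=\ageGW_j(\tau^-)+X_a>X_a=\ageGW_a(\tau)$ for all $j\neq a$, i.e.\ the just-polled sensor has the strictly smallest gateway age.

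I would then derive the two halves of ``exactly once'' directly, for two consecutive poll instants $\tau<\tau'$ of a fixed sensor $i$. For \emph{at least once}: if some sensor $j$ is never polled in $(\tau,\tau')$, then since neither $i$ (by maximality of $\tau,\tau'$) nor $j$ is polled in that window, both ages pick up exactly the same increment there---the summed length of all transmissions in the window, sensor polls and gateway-to-monitor transmissions alike---so, together with $\ageGW_j(\tau)>\ageGW_i(\tau)$ from the second observation, $\ageGW_j(t)>\ageGW_i(t)$ for every $t\in(\tau,\tau')$, contradicting that MAF selects $i$ at $\tau'$. For \emph{at most once}: if $j$ is polled at two consecutive instants $\sigma_1<\sigma_2$ inside $(\tau,\tau')$, then right after $\sigma_1$ we have $\ageGW_j(\sigma_1)=X_j$ while $\ageGW_i(\sigma_1)$ equals $X_i$ plus the summed length of all transmissions over $(\tau,\sigma_1]$---a sum that already includes the length $X_j$ of the transmission completing at $\sigma_1$---so $\ageGW_i(\sigma_1)>\ageGW_j(\sigma_1)$; this inequality again propagates over $(\sigma_1,\sigma_2)$ and contradicts the selection of $j$ at $\sigma_2$. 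Existence of $\tau'$ itself is part of the package: the same ``at least once'' reasoning applied to any sensor polled infinitely often---one exists because $M^{(s)}$ issues infinitely many polls---shows every sensor, $i$ included, is polled infinitely often.

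The genuinely delicate point, and the one I expect to need the most care, is that the reset value $X_j$ is \emph{not} stochastically comparable with the other sensors' ages, so one cannot argue via a ``smallest age'' ordering of the reset values themselves; what saves the argument is the \emph{pathwise} domination above---any gateway age that has already absorbed one full transmission of length $X_j$ strictly exceeds $X_j$---together with the fact that sensors not currently being polled accumulate increments identically. Supporting care is needed in the bookkeeping of those increments across windows that may contain gateway-to-monitor transmissions as well as sensor polls, and in the initial configuration: if several sensors start with equal ages, any fixed tie-breaking rule pins down the cyclic order, and since every inequality used above is strict the conclusion is unaffected by ties (which anyway have probability zero under continuous transmission-time laws).
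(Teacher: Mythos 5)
Your proof is correct and takes essentially the same route as the paper's: both rest on the observation that, under the maximum-age-first rule, the just-polled sensor ends up with the strictly smallest gateway age while all sensors not being polled accrue identical increments and therefore preserve their relative ordering, which forces the polling to be round-robin. You merely make explicit the ``at least once,'' ``at most once,'' and recurrence steps that the paper's short argument (based on the age-update equations~(\ref{eqn:coupledGWEvolmu1})) leaves implicit.
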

For the proofs, please see the appendix. We will now calculate the average age~(\ref{eqn:avgAgei}) of sensor $i$ when an optimal policy $\mu \in M^{(s)}$ is used for any given $s\ge 1$. To do so, we must calculate $E[Y_{i,k}]$, $E[Y_{i,k} T_{i,k-1}]$, and $E[Y^2_{i,k}]$. Note that there can be $0 \le R \le s-1$ other sensors that are polled after sensor $i$ is polled and before the polled updates are transmitted to the monitor. To calculate the expected values above, we will first calculate the conditional expectations conditioned on the knowledge of $R$.

Suppose $R$ sensors were polled after sensor $i$ and before sending the polled updates to the monitor that are received at $b_{i,k-1}$. Given Corollary~\ref{corr:rr}, after the transmission to the monitor, we will have $n-R-1$ other sensors that will be polled before sensor $i$ is polled again. Including the transmission of sensor $i$, we will have a total of $n-R$ sensor transmissions. However, since we must transmit to the monitor after every $s$ sensor transmissions, the $n-R$ sensor transmissions will see $L \triangleq \lceil \frac{n-R}{s}\rceil$ ($\lceil x \rceil$ is the ceiling function) transmissions to the monitor, where the last of these transmissions is when the new update from $i$ will be sent by the gateway to the monitor, which will be received by the monitor at $b_{i,k}$. 

The total number of sensors that will be polled in the interval $(b_{i,k-1}, b_{i,k})$ of length $Y_{i,k}$ is therefore $Ls$. The interval $T_{i,k-1}$ has $R+1$ sensor transmissions and $1$ to the monitor. As the sensors have iid transmission times, one would expect $R$ to take values from the set $\{0,1,\ldots,s-1\}$ with equal probability. We can write, using the steady state equivalents $Y_i$ and $T_i$, respectively, of the random variables $Y_{i,k}$ and $T_{i,k-1}$
\begin{subequations}
\begin{align}
&\E{Y_{i}|R} = Ls \E{X} + L \E{X_0},\\
&\E{Y_{i}^2|R} = Ls \Var{X} + L \Var{X_0} + 2 L^2 s \E{X} \E{X_0}\nonumber\\ 
&\qquad\qquad\quad+  L^2 s^2 (\E{X})^2 + L^2 (\E{X_0})^2,\\
&\E{Y_{i} T_{i}|R} = L (s + R + 1) \E{X} \E{X_0}\nonumber\\
&\qquad\qquad\quad\qquad+ Ls (R+1) (\E{X})^2 + L (\E{X_0})^2.
\end{align}
\end{subequations}
Averaging over $R$ and substituting $\eta_1 \triangleq \E{X_0}/\E{X}$ and $\eta_2 \triangleq \Var{X_0}/\Var{X}$ gives
\begin{subequations}
	\begin{align}
	&\E{Y_{i}} = \E{L}\E{X} (s+\eta_1),\\
	&\E{Y_{i}^2} = \E{L}\Var{X} (s + \eta_2)\nonumber\\ 
	&\qquad\qquad+ \E{L^2} (\E{X})^2 (s+\eta_1)^2,\\
	&\E{Y_{i} T_{i}} = (s+\eta_1)(\E{X})^2 (\E{LR} + (\eta_1 + 1)\E{L}).
	\end{align}
\end{subequations}
Substituting in Equation~(\ref{eqn:age}) we get
\begin{align}
\ageMAvg_i &= \frac{\Var{X}}{2\E{X}} \frac{(s+\eta_2)}{(s+\eta_1)} + \frac{\E{L^2}}{2\E{L}}\E{X}(s+\eta_1)\nonumber\\
&\qquad\qquad + \frac{\E{LR}}{\E{L}} \E{X} + (\eta_1 + 1)\E{X},
\label{eqn:avgAgeHomo}
\end{align}
for any sensor $1\le i\le n$. Substituting in~(\ref{eqn:AoI}) gives us the AoI for a policy that polls a sensor using the maximum age first (MAF) rule and has the gateway transmit to the monitor after polling exactly $s$ sensors. Let $s=s^*$ minimize the AoI. We refer to the resulting policy as $s^*,$MAF.

To gain insight into how $s^*$ changes with $n$, we consider the approximation $L\approx n/s$. This allows us to rewrite $\E{LR} = (n/s) \E{R}$. Further note that $\E{R} = (s-1)/2$. Making these substitutions in~(\ref{eqn:avgAgeHomo}) for $\ageMAvg_i$  we get an approximation 
\begin{align}
\widehat{\ageMAvg}_i &= \frac{\Var{X}}{2\E{X}}\frac{(s+\eta_2)}{(s+\eta_1)}\nonumber\\
 &\quad\quad+ \E{X} \left(\frac{n}{2s} (s+\eta_1) + \frac{s-1}{2} + (\eta_1 + 1)\right).
\end{align}
Now consider the case when $\Var{X} = 0$ (sensor transmission times are deterministic) or $\eta_1 = \eta_2$. Solving for the first-order necessary condition for optimality, it is easy to show that $\widehat{\ageMAvg}_i$ is minimized at $s = \sqrt{\eta_1 n}$. Interestingly, the optimal number of sensors to poll before transmitting to the monitor increases as the square root of the number of sensors in the network. This observation motivates a heuristic policy $\widehat{s^*},\text{MAF}$ that sets $s = \widehat{s^*} \triangleq \text{round}(\sqrt{\eta_1 n})$.
\subsection{Heterogeneous Transmission Lengths} The sensors have transmission lengths that like before are mutually independent but may have non-identical distributions. We restrict ourselves to policies in the set $M^{(s)}$. We consider a policy that polls a sensor that minimizes the expected change in the average age that results from the sensor's transmission. Suppose a sensor $i$ is polled at time $t$. The expected change in the average age at the gateway of each of the other sensors is $\E{X_i}$. The expected change in the age of sensor $i$ is $\E{X_i} - \ageGW_i(t)$. The change in age averaged over sensors is $\frac{n\E{X_i} - \ageGW_i(t)}{n}$.

The policy polls sensor $i^*$ that minimizes the change in age. We will call this as the \emph{minimum change} in \emph{age first} (MCA) rule. We have
\begin{align}
i^* = \argmin_i \E{X_i} - \ageGW_i(t)/n.
\label{eqn:minCost}
\end{align}
Note that this policy polls the sensor with the maximum age first when the $X_i$ are iid. The MCA rule combined with $s^*$ and $\widehat{s^*}$ gives, respectively, the policies $s^*,\text{MCA}$ and $\widehat{s^*},\text{MCA}$.
\section{Evaluation and Conclusions}
\label{sec:eval}
\begin{figure*}
	\begin{center}
		\subfloat[]{\includegraphics[width = 0.24\linewidth]{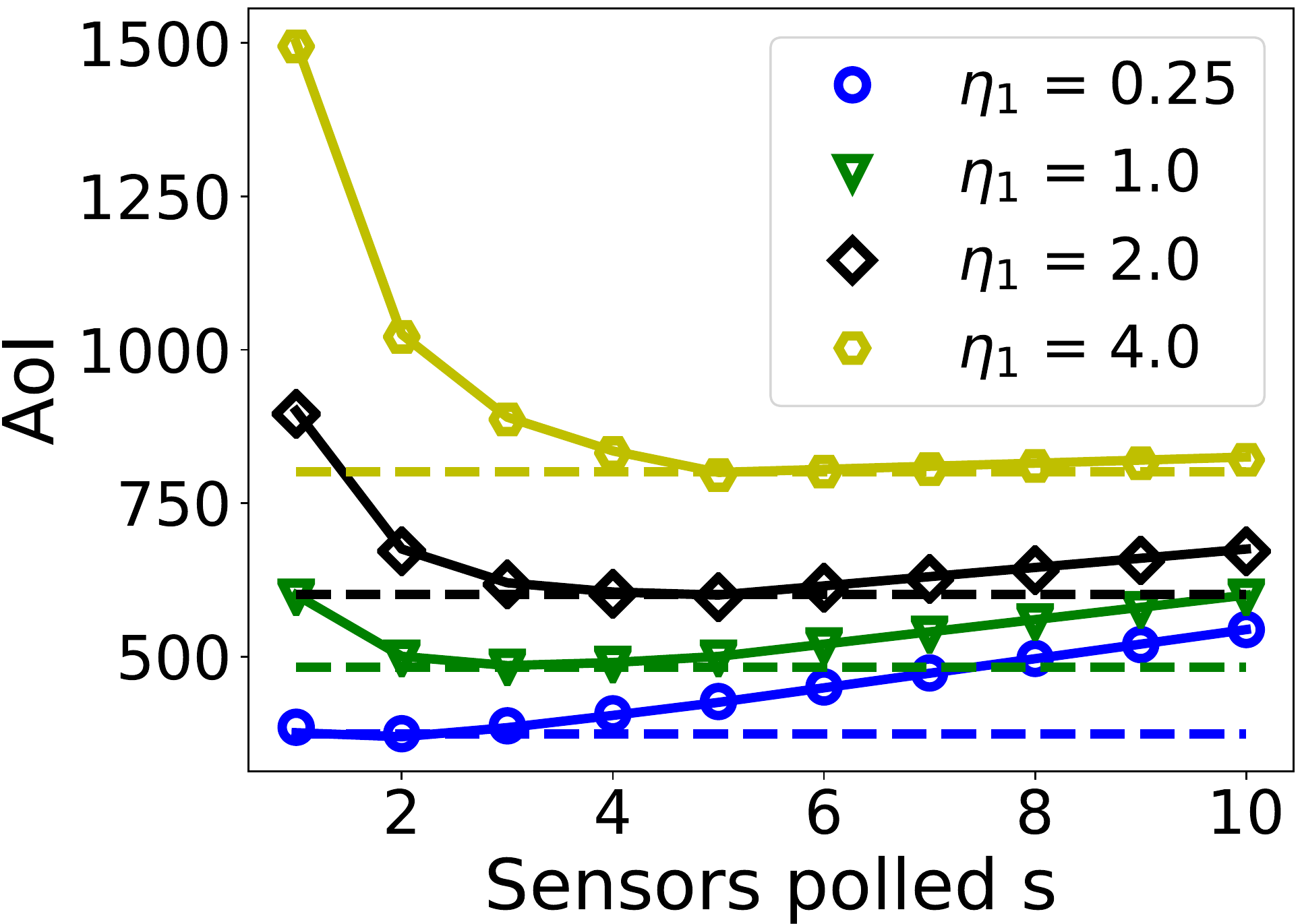}
	\label{fig:avg_age_vs_s_TN}	
}
\subfloat[]{\includegraphics[width = 0.24\linewidth]{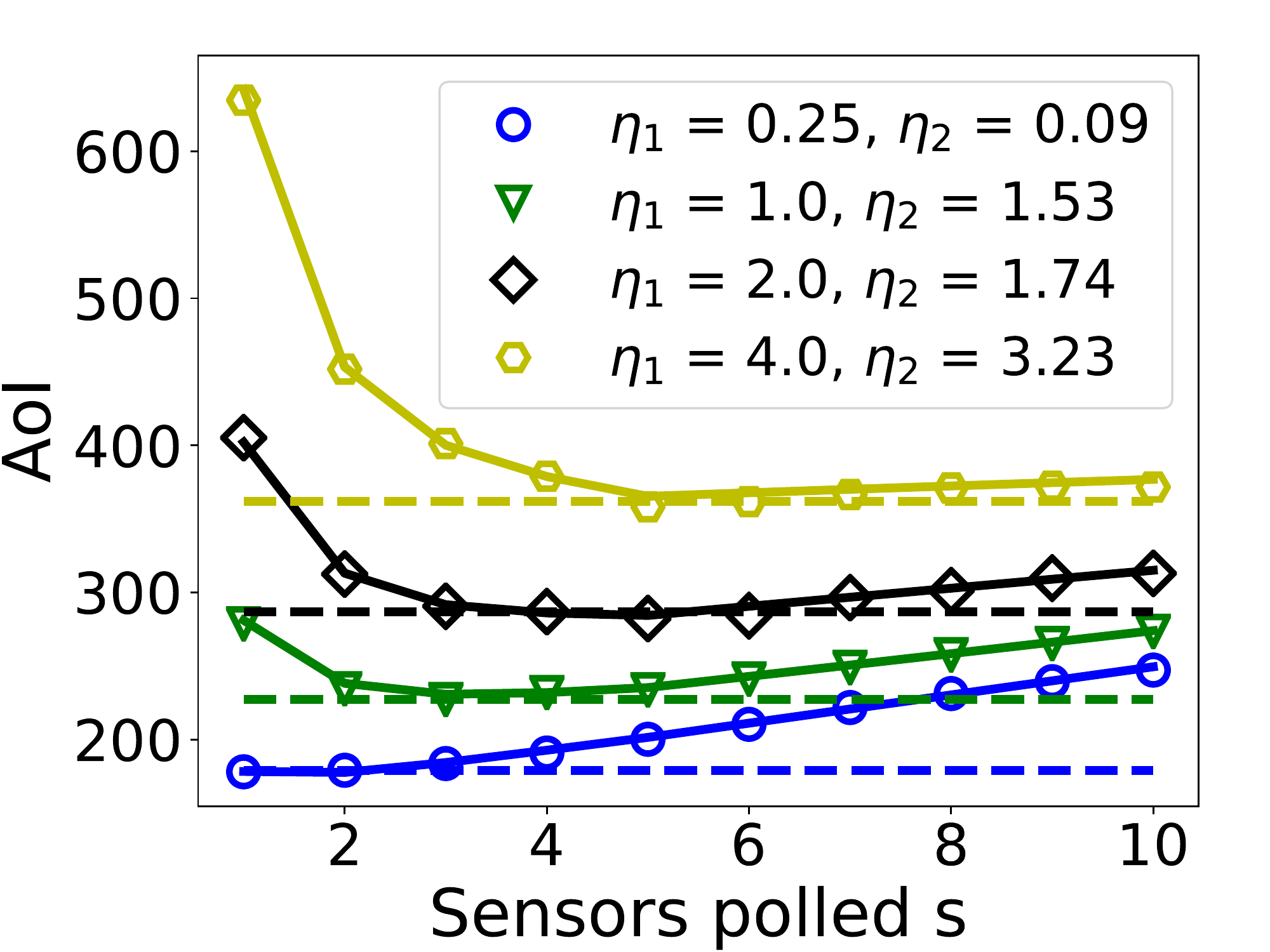}
	\label{fig:avg_age_vs_s_HE}	
}
\subfloat[]{\includegraphics[width = 0.24\linewidth]{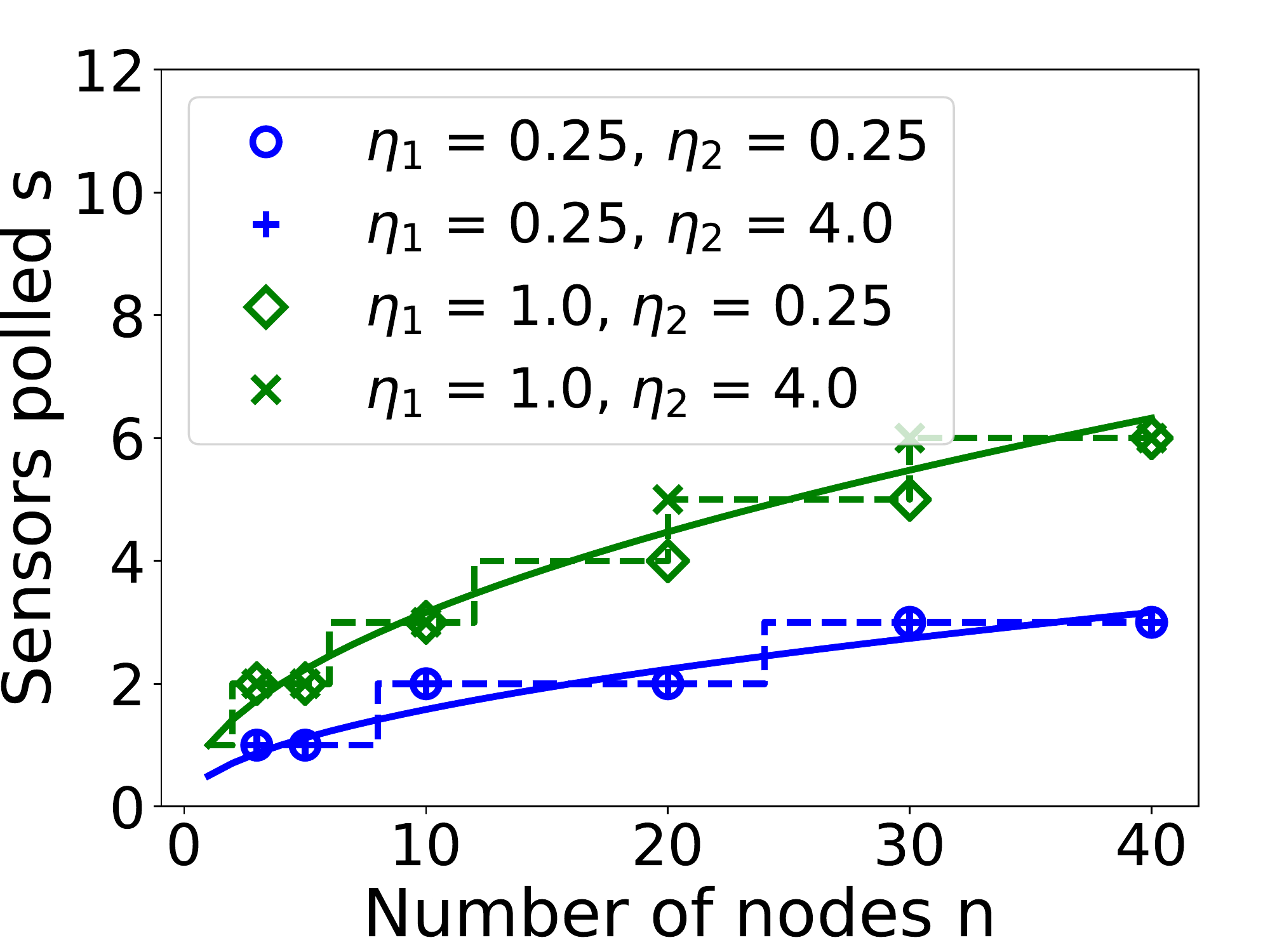}
	\label{fig:s_vs_n_TN}	
}
\subfloat[]{\includegraphics[width = 0.24\linewidth]{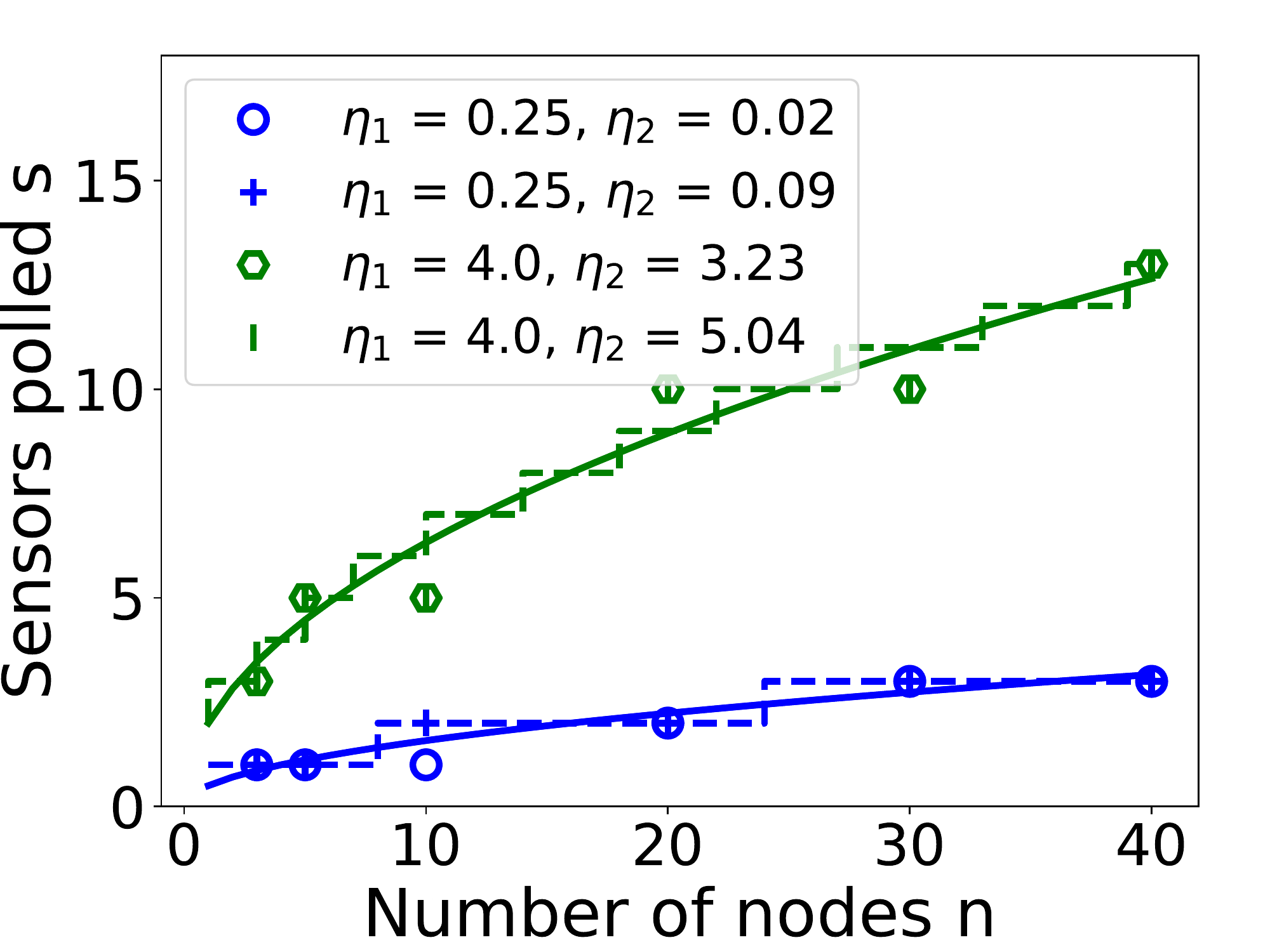}
	\label{fig:s_vs_n_HE}	
}
		\caption{\small \textbf{(a)} Transmission times were simulated using a truncated Gaussian (range space $(0,\infty)$). For sensors, the mean was $50$ and variance $25$. $\eta_2$ was set to $4$. The horizontal dashed lines show the age obtained using $\widehat{s^*}$,MAF. The markers show AoI obtained via simulation. The line shows the AoI obtained analytically. \textbf{(b)} Instead of the truncated Gaussian in Figure (a) we used a hyperexponential with mean $20$ and variance $1300$. \textbf{(c)} Transmission times were as in (a). The solid lines are $\sqrt{\eta_1 n}$. The dashed staircase shows $s=\widehat{s^*}$. The corresponding markers are $s^*$. \textbf{(d)} Same as (c) but for the hyperexponential used for (b).}	
	\end{center}
	\label{fig:etan_plots}
	\vspace{-0.33in}
\end{figure*}

\begin{figure*}
	\begin{center}
		\subfloat[]{\includegraphics[width = 0.24\linewidth]{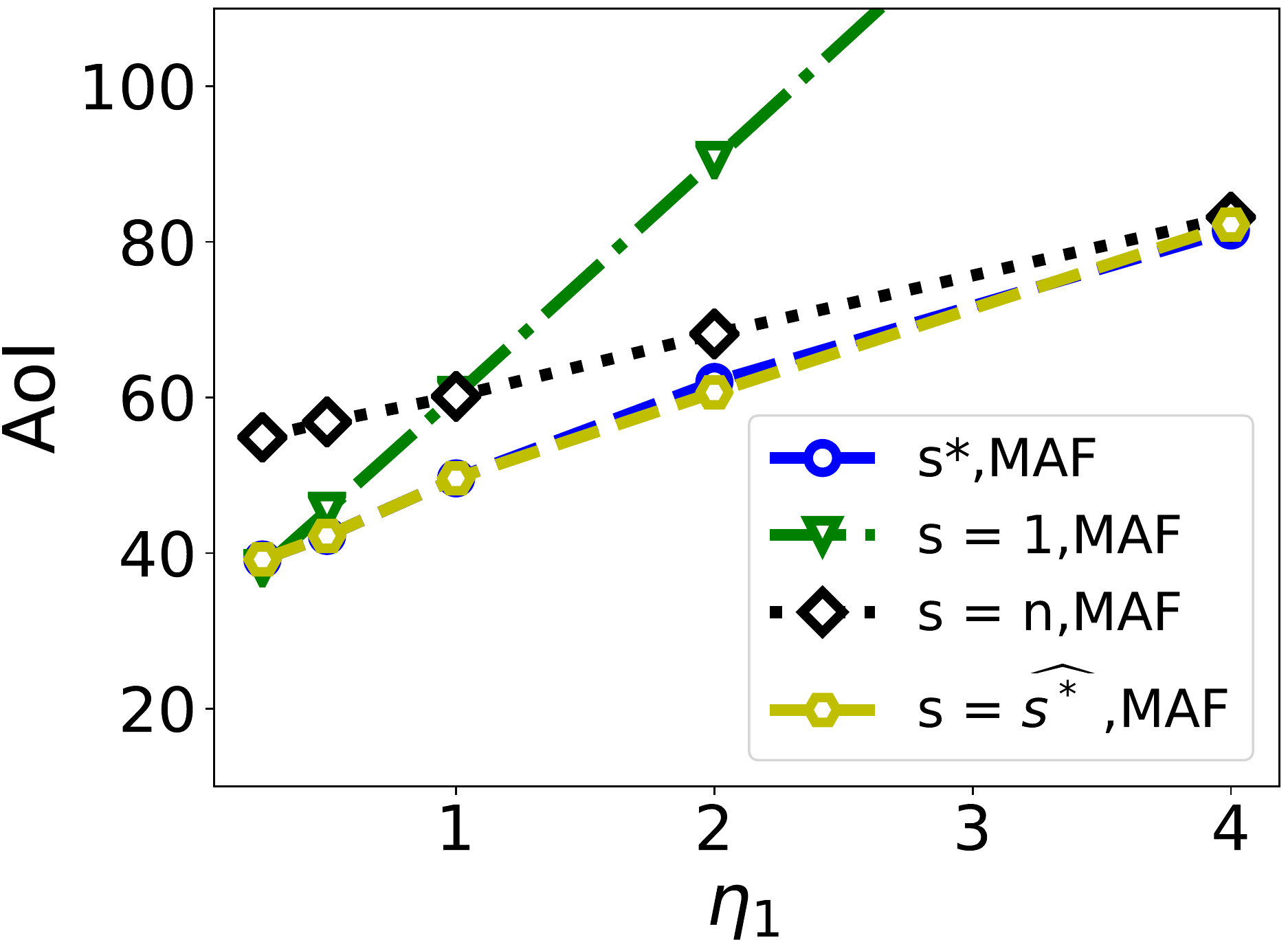}
			\label{fig:avg_age_homo_unif}	
		}
		\subfloat[]{\includegraphics[width = 0.24\linewidth]{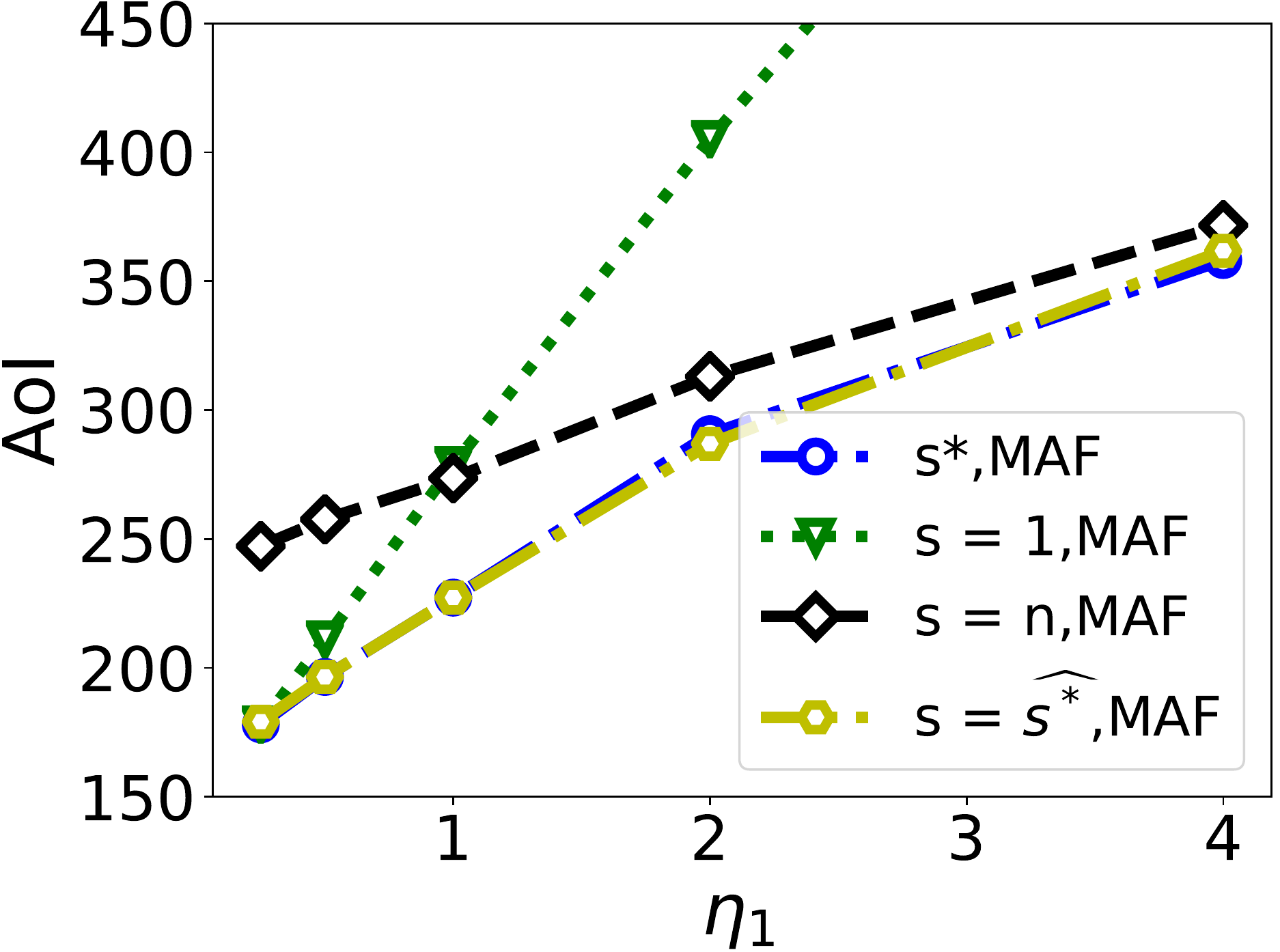}
			\label{fig:avg_age_homo_hyperexp}	
		}
		\subfloat[]{\includegraphics[width = 0.24\linewidth]{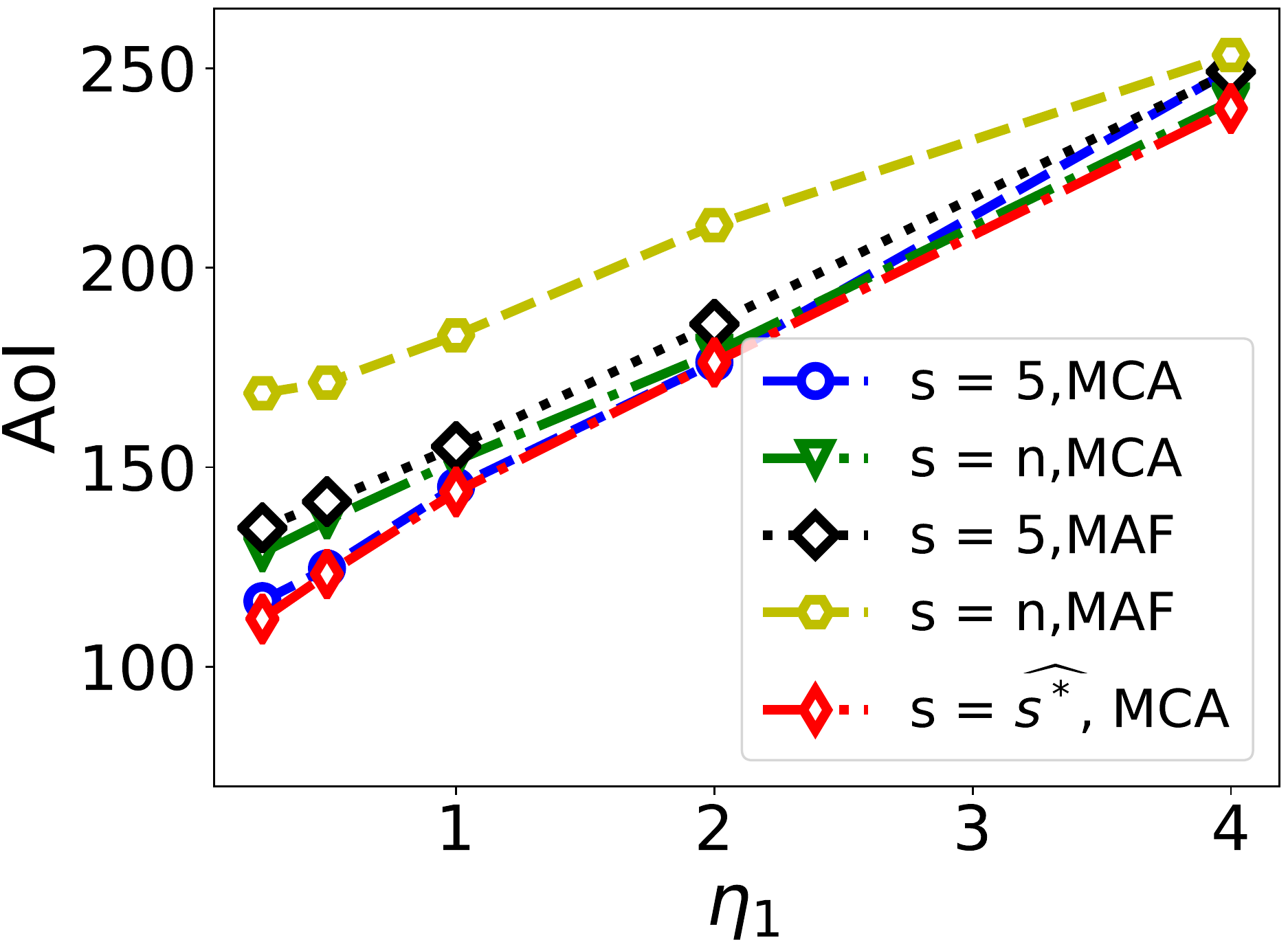}
			\label{fig:avg_age_hetero_exp}	
		}
		\subfloat[]{\includegraphics[width = 0.24\linewidth]{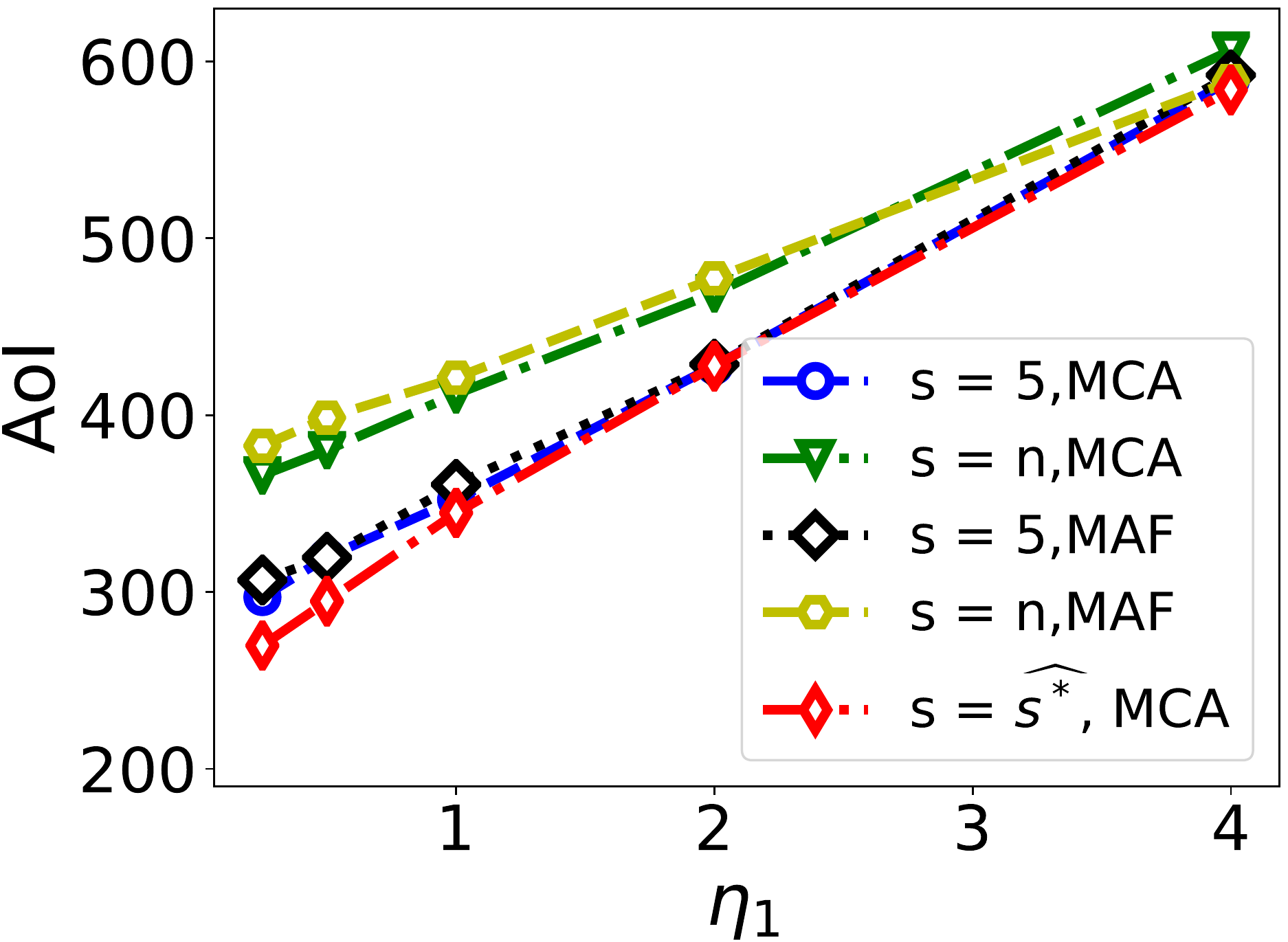}
			\label{fig:avg_age_hetero_hyperexp}	
		}
		\caption{\small Average age obtained for $n=10$ sensors. \textbf{(a)} Transmission times are uniformly distributed. For sensors we chose values over the range $(0,10)$ in an iid manner. \textbf{(b)} Transmission times are hyperexponential. For sensors, the mean was $20$ and the variance was $1300$. $(\eta_1,\eta_2)$ were chosen as $(0.25, 0.06)$, $(0.5, 0.25)$, $(1,1)$, $(2,1.6)$, and $(4,3.2)$. \textbf{(c)} Transmission times are exponentially distributed. However, non-identical for the sensors. \textbf{(d)} As in (c), but with hyperexponential transmission times. $(\eta_1,\eta_2)$ were chosen as $(0.25, 0.07)$, $(0.5, 0.3)$, $(1,1.2)$, $(2,4.6)$, and $(4,18.6)$.}	
	\end{center}
	\vspace{-0.3in}
\end{figure*}
We conclude with a representative sample of our evaluation of the proposed policies. We simulated a wide selection of transmission time distributions, including that of uniform, exponential, truncated Gaussian, and hyperexponential (coefficient of variation greater than $1$). Figures~\ref{fig:avg_age_vs_s_TN} and~\ref{fig:avg_age_vs_s_HE} demonstrate how AoI varies as a function of $s$ for a network of $n=10$ sensors. Observe that the choice of number of sensors the gateway must poll before transmitting to the monitor is crucial to minimizing AoI. Also, as $\eta_1$ increases, that is the average transmission time to the monitor becomes larger in comparison to the average sensor transmission time, the optimal choice moves toward the total number of sensors.

Figures~\ref{fig:s_vs_n_TN} and~\ref{fig:s_vs_n_HE} show the optimal $s^*$ and the approximation $\widehat{s^*}$ as a function of the number of nodes. The choice of transmission time distributions for the figures are same as of, respectively, Figures~\ref{fig:avg_age_vs_s_TN} and~\ref{fig:avg_age_vs_s_HE}. It is clear that the optimal values follow $\sqrt{\eta_1 n}$ closely. Also, for any selection of $\eta_1$, the optimal doesn't vary much over different selections of $\eta_2$.

Figures~\ref{fig:avg_age_homo_unif} and~\ref{fig:avg_age_homo_hyperexp} compare the performance of $s^*$ and $\widehat{s^*}$ with the choices of $s=1$ and $s=n=10$ for when the sensor transmission times are iid (homogeneous setting). For all $s$, sensors are selected in a MAF manner. Not surprisingly, $s^*$ does much better than the choices of $s=1$ and $s=n=10$. However, it is worth noting that $\widehat{s^*}$ does almost as well as $s^*$. 

We conclude with Figures~\ref{fig:avg_age_hetero_exp} and~\ref{fig:avg_age_hetero_hyperexp} that compare the MCA with MAF for different selections of $s$ for when the sensor transmission times are independent but non-identical (heterogeneous setting). We selected the mean for each sensor uniformly and randomly from $(0,40)$. The mean of the transmission to gateway is obtained as the average of means selected for the sensors scaled by $\eta_1$. To keep the presentation simple, we only state the averages of the variances of the sensor transmission times, which were $\approx 200$ and $\approx 1600$, respectively, for Figures~\ref{fig:avg_age_hetero_exp} and~\ref{fig:avg_age_hetero_hyperexp}. In both figures, $\widehat{s^*},$MCA does the best. In Figure~\ref{fig:avg_age_hetero_exp}, in which the transmission times were exponentially distributed, MAF does much worse than MCA. However, in Figure~\ref{fig:avg_age_hetero_hyperexp}, in which the transmission times were hyperexponential, MAF does almost as well as MCA. For both the cases, other choices of $s$ do worse than $\widehat{s^*}$.

\begin{spacing}{1.0}
\bibliographystyle{IEEEtran}
\bibliography{main}
\end{spacing}

\appendices
\section{Proof of Theorem~\ref{thm:MAF}}
\label{app:thmProof}
We will broadly follow the outline of the proof of Theorem $3.2$ in~\cite{AO-STS-MSS}. Let $\mu\in M^{(s)}$ be a policy, which at a decision instant at which a sensor must be polled, always a polls a sensor that has the maximum age first. Let $\pi$ be any policy in the set of policies $M^{(s)}$. A policy influences the age vectors at the gateway and the monitor.

We will use $[i]$ to refer to the $i$\textsuperscript{th} largest age. Let $\widehat{\ageMVec}_\pi(t) = \vec{\ageM_{\pi,[1]}(t)&\cdots&\ageM_{\pi,[n]}(t)}$ and $\widehat{\ageGWVec}_\pi(t) = \vec{\ageGW_{\pi,[1]}(t)&\cdots&\ageGW_{\pi,[n]}(t)}$ be the vectors, respectively of the ordered sensor ages at the monitor and the gateway, when using $\pi$. The corresponding age process is $\{(\widehat{\ageMVec}_\pi(t), \widehat{\ageGWVec}_\pi(t)), t\ge 0\}$. We will assume that $\widehat{\ageMVec}_\pi(0^-) = \widehat{\ageGWVec}_\pi(0^-) = \widehat{\ageMVec}_\mu(0^-) = \widehat{\ageGWVec}_\mu(0^-)$. The following lemma compares the policies $\mu$ and $\pi$.
\begin{lemma}
	Let $\widehat{\ageMVec}_\pi(0^-) = \widehat{\ageGWVec}_\pi(0^-) = \widehat{\ageMVec}_\mu(0^-) = \widehat{\ageGWVec}_\mu(0^-)$ for all policies $\pi\in M^{(s)}$. We will also assume that both the policies begin at $t=0$ by either first polling a sensor or having the gateway transmit to the monitor. For any given $s$, we have $\{\widehat{\ageMVec}_\mu(t), t\ge 0\} \le_{st} \{\widehat{\ageMVec}_\pi(t), t\ge 0\}$.
\label{lem:stochasticOrdering}	
\end{lemma}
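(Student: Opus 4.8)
The plan is to prove the lemma by an explicit sample-path coupling together with an induction over decision epochs, in the spirit of the stochastic-ordering arguments for maximum-age-first policies in~\cite{AO-STS-MSS,AoI-Igor-SPBN}. Since the sensor transmission times are iid and every policy in $M^{(s)}$ forces the gateway to alternate between polling exactly $s$ sensors and transmitting to the monitor, the sequence of decision \emph{types} (poll or send) is the same, epoch by epoch, under $\mu$ and under any $\pi\in M^{(s)}$ (this is where the initial-decision assumption is used). I would therefore couple the two systems so that the $j$\textsuperscript{th} sensor transmission has the same length $x$ in both, and the $\ell$\textsuperscript{th} gateway-to-monitor transmission has the same length $X_0$ in both. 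Under this coupling the decision epochs coincide for the two systems, so it suffices to compare the coupled age vectors at this common sequence of epochs: between consecutive epochs every age in both systems grows at unit rate, which trivially preserves any componentwise ordering of the sorted vectors.

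The induction hypothesis I would carry is that at every decision epoch, $\widehat{\ageGWVec}_\mu \le \widehat{\ageGWVec}_\pi$ and $\widehat{\ageMVec}_\mu \le \widehat{\ageMVec}_\pi$, both componentwise on the sorted vectors; the base case is the assumed equality at $0^-$. For a \emph{send} epoch, rule~(\ref{eqn:ageEvolAtGWb}) adds the common value $X_0$ to every gateway coordinate, so the sorted gateway vectors merely shift and the inequality is kept, and then~(\ref{eqn:ageEvolAtMonitor}) overwrites the monitor vector with the gateway vector, giving $\widehat{\ageMVec}_\mu = \widehat{\ageGWVec}_\mu \le \widehat{\ageGWVec}_\pi = \widehat{\ageMVec}_\pi$ (this overwrite is a no-op for sensors not polled in the preceding cycle, so applying it to the whole vector is legitimate). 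For a \emph{poll} epoch the monitor vectors again only shift by the common transmission length $x$ and keep the inequality, while at the gateway, by~(\ref{eqn:ageEvolAtGWa}), polling a sensor of rank $m$ in the sorted gateway vector $(c_1\ge\cdots\ge c_n)$ replaces $c_m$ by $x$ and adds $x$ to every other coordinate, which — since ages are nonnegative — reorders to $(c_1+x,\ldots,c_{m-1}+x,c_{m+1}+x,\ldots,c_n+x,\,x)$. Policy $\mu$ uses $m=1$; policy $\pi$ uses some $m\ge 1$. Writing $(a_i)$ and $(b_i)$ for the pre-poll sorted gateway vectors of $\mu$ and $\pi$, the post-poll comparison then reduces to the elementary facts $a_{i+1}\le b_{i+1}\le b_i$ for $i\le m-1$, $a_{i+1}\le b_{i+1}$ for $m\le i\le n-1$, and equality in the last coordinate, all immediate from the induction hypothesis and sortedness. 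This closes the induction and yields $\widehat{\ageMVec}_\mu(t)\le\widehat{\ageMVec}_\pi(t)$ componentwise for all $t\ge 0$ on the coupled space.

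The last step — dominating $\mu$'s operation ``move the largest coordinate to $x$, bump the rest by $x$'' by $\pi$'s ``move an arbitrary coordinate to $x$, bump the rest by $x$'' in the componentwise order of sorted vectors — is where the maximum-age-first rule is actually used, and it is the only place any inequality is not a triviality; I expect it to be the crux. Once the pathwise domination holds and one notes the coupling preserves the marginal law of each process, Strassen's theorem gives $\{\widehat{\ageMVec}_\mu(t),t\ge 0\}\le_{st}\{\widehat{\ageMVec}_\pi(t),t\ge 0\}$, which is the claim. The only bookkeeping points to be careful about are the alignment of the two systems at $t=0$ and the handling of the within-cycle sub-epochs $\tau_1,\ldots,\tau_{s}$ as part of the common epoch sequence.
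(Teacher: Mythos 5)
Your proposal is correct and follows essentially the same route as the paper: a sample-path coupling of the transmission times (valid because sensor times are iid and every policy in $M^{(s)}$ sends after exactly $s$ polls), an induction over decision epochs whose crux is the comparison of the sorted gateway age vectors after a poll by the maximum-age-first rule versus an arbitrary rank, and a final appeal to the coupling characterization of $\le_{st}$. Your explicit index-by-index comparison $a_{i+1}\le b_{i+1}\le b_i$ is just a more detailed writing of the paper's Lemma~\ref{lem:induction}, and carrying the monitor vector in the induction (rather than recovering it from the resets, as the paper does) is an immaterial variation.
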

As in~\cite{AO-STS-MSS}, we define coupled processes $\ageGWVecCoupled_\pi(t)$ and $\ageGWVecCoupled_\mu(t)$ that are governed by the stochastic laws of, respectively, the processes $\widehat{\ageGWVec}_\pi(t)$ and $\widehat{\ageGWVec}_\mu(t)$. They are coupled in that they observe the same randomly drawn sample of transmission times at every decision time instant. The processes can be coupled because sensor transmission times are iid, independent of the transmission time to the monitor, and because both the policies $\pi$ and $\mu$ must have the gateway transmit to the monitor after polling exactly $s$ sensors. Finally, note that we assume that both policies choose either to poll a sensor (of their choice) or to have the gateway transmit to the monitor at time $t=0$. Also define $\ageMVecCoupled_\pi(t)$ and $\ageMVecCoupled_\mu(t)$, which are the age processes at the monitor corresponding to, respectively, $\ageGWVecCoupled_\pi(t)$ and $\ageGWVecCoupled_\mu(t)$.

To prove lemma~\ref{lem:stochasticOrdering} it is sufficient~\cite{AO-STS-MSS} to show that
\begin{align}
P[\ageMVecCoupled_\mu(t) \le \ageMVecCoupled_\pi(t), t\ge 0] = 1.
\label{eqn:stochOrderingProb1}
\end{align}

We will need the following lemma to show the above. 

\begin{lemma}
Consider a certain time $t$ at which the policies must choose an action. Note that for the coupled processes the decision making times coincide. Let $t'$ be the time when the transmission due to the decision completes. If $\ageGWCoupled_{\mu,[i]}(t) \le \ageGWCoupled_{\pi,[i]}(t)$ for $1\le i \le n$, then $\ageGWCoupled_{\mu,[i]}(t') \le \ageGWCoupled_{\pi,[i]}(t')$.
\label{lem:induction}
\end{lemma}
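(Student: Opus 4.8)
\section*{Proof proposal for Lemma~\ref{lem:induction}}

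The plan is to argue pathwise, splitting on the type of action taken at the decision instant $t$. Because $\mu$ and $\pi$ both lie in $M^{(s)}$ and, by the coupled construction, make decisions at coinciding instants starting from a common configuration, the number of sensors polled since the last transmission to the monitor is the same in the two runs at every such instant; since in $M^{(s)}$ the poll-versus-transmit choice is a function only of this count, at $t$ \emph{either both policies poll a sensor or both have the gateway transmit to the monitor}. I would record this reduction first, so that $t'$ is unambiguous. If both transmit, then \eqref{eqn:ageEvolAtGWb} increments every gateway age by the common transmission time $X_0$ (common by the coupling), so $\ageGWCoupled_{\mu,[i]}(t') = \ageGWCoupled_{\mu,[i]}(t) + X_0$ and $\ageGWCoupled_{\pi,[i]}(t') = \ageGWCoupled_{\pi,[i]}(t) + X_0$ for every $i$, and the hypothesis is preserved slotwise. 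So the content is all in the polling case.

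For the polling case let $X$ be the (common, by the coupling) transmission time of the polled sensor. Under $\mu$ the maximum-age-first rule polls the sensor indexed $[1]$ at time $t$, so by \eqref{eqn:ageEvolAtGWa} its age becomes $X$ while every other sensor ages by $X$; since gateway ages are nonnegative, each non-polled sensor ends with age at least $X$, so after re-sorting the freshly polled sensor occupies the last slot and $\ageGWCoupled_{\mu,[i]}(t') = \ageGWCoupled_{\mu,[i+1]}(t)+X$ for $1\le i\le n-1$, while $\ageGWCoupled_{\mu,[n]}(t') = X$. Under $\pi$, suppose the polled sensor is indexed $[j]$ for some $1\le j\le n$; the same reasoning puts it in the last slot and shifts the others up by $X$, giving $\ageGWCoupled_{\pi,[i]}(t') = \ageGWCoupled_{\pi,[i]}(t)+X$ for $1\le i<j$, $\ageGWCoupled_{\pi,[i]}(t') = \ageGWCoupled_{\pi,[i+1]}(t)+X$ for $j\le i\le n-1$, and $\ageGWCoupled_{\pi,[n]}(t') = X$. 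I would then compare slot by slot: slot $n$ is $X$ on both sides; for $j\le i\le n-1$ the hypothesis gives $\ageGWCoupled_{\mu,[i+1]}(t)\le\ageGWCoupled_{\pi,[i+1]}(t)$, hence $\ageGWCoupled_{\mu,[i]}(t')\le\ageGWCoupled_{\pi,[i]}(t')$; and for $1\le i<j$ we chain $\ageGWCoupled_{\mu,[i+1]}(t)\le\ageGWCoupled_{\mu,[i]}(t)\le\ageGWCoupled_{\pi,[i]}(t)$, using first that a sorted vector is non-increasing and then the hypothesis, to again obtain $\ageGWCoupled_{\mu,[i]}(t')\le\ageGWCoupled_{\pi,[i]}(t')$. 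This covers all slots.

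The only step with real content is the block $1\le i<j$, which is the interchange argument behind the optimality of maximum-age-first: polling the top-ranked sensor lets each of $\mu$'s remaining sorted ages advance one rank, and that one-rank slack is precisely what absorbs the offset $j-1$ created by $\pi$ polling a lower-ranked sensor. The points needing care are (i) verifying that the freshly polled sensor really lands in the bottom slot after re-sorting, which uses nonnegativity of the gateway ages together with the fact that the transmission length $X$ is shared through the coupling; and (ii) the opening reduction that both coupled policies take the same kind of action at $t$, which relies on both policies being in $M^{(s)}$ and on the poll count since the last monitor transmission being inherited identically by the two coupled runs. Everything else is bookkeeping on sorted vectors.
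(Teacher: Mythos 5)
Your proposal is correct and follows essentially the same route as the paper: split on whether both coupled policies poll or transmit at $t$, write the post-transmission ordered age vectors for $\mu$ (max-age-first) and for $\pi$ polling the $j$\textsuperscript{th} largest, and compare slotwise. The only difference is one of detail: you make explicit the slot-by-slot verification (in particular the chaining $\ageGWCoupled_{\mu,[i+1]}(t)\le\ageGWCoupled_{\mu,[i]}(t)\le\ageGWCoupled_{\pi,[i]}(t)$ for slots $i<j$ and the reason both coupled policies take the same action type, via the common poll count in $M^{(s)}$), which the paper's proof leaves implicit.
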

\begin{proof}
Note that, given the coupled processes, both policies must either choose to poll a sensor or to send to the monitor at $t$. First consider the case that they choose to poll. 

Let the resulting transmission time be $z$. For the policy $\mu$, which polls the sensor with the maximum age,
\begin{subequations}\label{eqn:coupledGWEvolmu1}
\begin{align}
\ageGWCoupled_{\mu,[n]}(t') &= z,\\
\ageGWCoupled_{\mu,[i]}(t') &= \ageGWCoupled_{\mu,[i+1]}(t) + z,\  1\le i \le n-1.
\end{align}
\end{subequations}
Suppose the policy $\pi$ polls the sensor that has the $j$\textsuperscript{th} largest age at time $t$.
\begin{subequations}
\begin{align}
\ageGWCoupled_{\pi,[n]}(t') &= z,\\
\ageGWCoupled_{\pi,[i]}(t') &= \ageGWCoupled_{\pi,[i]}(t) + z,\  1\le i \le j-1,\\
\ageGWCoupled_{\pi,[i]}(t') &= \ageGWCoupled_{\pi,[i+1]}(t) + z,\  j\le i \le n-1.
\end{align}
\end{subequations}

Since by assumption $\ageGWCoupled_{\pi,[i]}(t) \ge \ageGWCoupled_{\mu,[i]}(t)$, $\forall i$, and given that all sensors other than those polled see an increase in their age at the gateway by $z$, the above update of ages implies that $\ageGWCoupled_{\pi,[i]}(t') \ge \ageGWCoupled_{\mu,[i]}(t')$, for all $i$.

Instead, if the policies choose to send to the monitor, all ages increase by $z$. Age orderings at $t$ are maintained at $t'$. This completes the proof of the lemma.
\end{proof}
\begin{proof}[Proof of Lemma~\ref{lem:stochasticOrdering}]
	Note that at times other than the instants when the monitor receives a transmission the ages at the monitor $\ageMCoupled_{\mu,i}(t)$ and $\ageMCoupled_{\pi,i}(t)$ increase in time with slope $1$. These are reset to the corresponding gateway ages (see~(\ref{eqn:ageEvolAtMonitor})) when the monitor receives a transmission from the gateway. Lemma~\ref{lem:induction} implies that at the end of all decision instants, given that the coupled processes $\ageGWVecCoupled_\pi(t)$ and $\ageGWVecCoupled_\mu(t)$ have the same age vectors at $t=0^{-}$, $\ageGWCoupled_{\mu,[i]}(t) \le \ageGWCoupled_{\pi,[i]}(t)$, for all $i$. Thus, on reset the ages at the monitor too satisfy $\ageMCoupled_{\mu,[i]}(t) \le \ageMCoupled_{\pi,[i]}(t)$. This implies that~(\ref{eqn:stochOrderingProb1}) is true, which proves the lemma.
\end{proof}
\begin{proof}[Proof of Theorem~\ref{thm:MAF}]
Since the AoI~(\ref{eqn:AoI}) is the expectation of a non-decreasing functional of the age process $\{\widehat{\ageMVec}(t), t\ge 0\}$, Lemma~\ref{lem:stochasticOrdering} implies that the policy $\mu$ leads to the smallest AoI among policies $\pi \in M^{(s)}$~\cite{shaked2007stochastic}. This proves the theorem.
\end{proof}
\section{Proof of Corollary~\ref{corr:rr}}
\label{app:corrProof}
\begin{proof}[Proof of Corollary~\ref{corr:rr}]
Observe from~(\ref{eqn:coupledGWEvolmu1}) how the sensor ages are updated at the gateway after a sensor with the largest age is polled and its transmission is received. Specifically, the sensor that is polled sees its age reset to the time its transmission took. The ages of all other sensors increase by this transmission time and maintain their ordering with respect to each other. The polled sensor ends up with the smallest age. There are $n-1$ sensors that have an age larger than it. All these $n-1$ must be polled before the sensor is polled again. This proves the corollary.	
\end{proof}

\end{document}